\renewcommand{\today}{\ifcase \month \or January\or February\or March\or %
	April\or May\or June\or July\or August\or September\or October\or November\or %
	December\fi, \number \year} 
\newcommand{\sgn}{\mathrm{sgn}}
\newcommand{\bA}{\mbox{\bf A}}
\newcommand{\bB}{\mbox{\bf B}}
\newcommand{\bD}{\mbox{\bf D}}
\newcommand{\bH}{\mbox{\bf H}}
\newcommand{\bW}{\mbox{\bf W}}
\newcommand{\bI}{\mbox{\bf I}}
\newcommand{\bJ}{\mbox{\bf J}}
\newcommand{\bV}{\mbox{\bf V}}
\newcommand{\bR}{\mbox{\bf R}}
\newcommand{\bU}{\mbox{\bf U}}
\newcommand{\bGamma}{\mbox{\boldmath $\Gamma$}}
\newcommand{\bUpsilon}{\mbox{\boldmath $\Upsilon$}}
\newcommand{\bXi}{\mbox{\boldmath $\Xi$}}
\newcommand{\bLambda}{\mbox{\boldmath $\Lambda$}}
\newcommand{\bTheta}{\mbox{\boldmath $\Theta$}}
\newcommand{\bSigma}{\mbox{\boldmath $\Sigma$}}
\newcommand{\bOmega}{\mbox{\boldmath $\Omega$}}
\newcommand{\bPhi}{\mbox{\boldmath $\Phi$}}
\newcommand{\bPsi}{\mbox{\boldmath $\Psi$}}
\newcommand{\cov}{\mathrm{cov}}
\newcommand{\tr}{\mathrm{tr}}
\newcommand{\diag}{\mathrm{diag}}
\newcommand{\beq}{\begin{eqnarray*}}
\newcommand{\eeq}{\end{eqnarray*}}
\newcolumntype{R}[2]{%
	>{\adjustbox{angle=#1,lap=\width-(#2)}\bgroup}%
	l%
	<{\egroup}%
}
\newtheorem{thm}{Theorem}[section]
\newtheorem{lem}{Lemma}[section]
\newtheorem{assum}{Assumption}[section]
\numberwithin{equation}{section}
\theoremstyle{definition}
\newtheorem{remark}{Remark}[section]
\def\@biblabel#1{\hspace*{-\labelsep}}
\begin{document}
	\bibliographystyle{ecta}
	
	\title{Large Global Volatility Matrix Analysis Based on Observation Structural Information}

	\date{\today}

	\author{
		Sung Hoon Choi\thanks{%
			Department of Economics, University of Connecticut, Storrs, CT 06269, USA. 
			E-mail: \texttt{sung\_hoon.choi@uconn.edu}.} \\ 
		\and Donggyu Kim\thanks{College of Business, Korea Advanced Institute of Science and Technology (KAIST), Seoul, Republic of Korea.
			Email: \texttt{donggyukim@kaist.ac.kr}.}\\ 
		}
	\maketitle
	\pagenumbering{arabic}	
\begin{abstract}
		\onehalfspacing
		In this paper, we develop a novel large volatility matrix estimation procedure for analyzing global financial markets. 
		Practitioners often use lower-frequency data, such as weekly or monthly returns, to address the issue of different trading hours in the international financial market.
		However, this approach can lead to inefficiency due to information loss. 
		To mitigate this problem, our proposed method, called Structured Principal Orthogonal complEment Thresholding (Structured-POET), incorporates observation structural information for both global and national factor models. 
		We establish the asymptotic properties of the Structured-POET estimator, and also demonstrate the drawbacks of conventional covariance matrix estimation procedures when using lower-frequency data. 
		Finally, we apply the Structured-POET estimator to an out-of-sample portfolio allocation study using international stock market data.\\
		
\noindent \textbf{Key words:} High-dimensionality, international financial market, low-rank matrix, multi-level factor model, POET, sparsity.
					
\end{abstract}

\newpage
	
\doublespacing

\section{Introduction}

Factor analysis and principal component analysis (PCA) are commonly used for various applications, including macroeconomic variable forecasting and portfolio allocation optimization \citep{bai2003inferential, bernanke2005measuring, fan2016incorporating, stock2002forecasting}.
Recent research has highlighted the importance of considering local factors in addition to global factors.
These local factors have an impact on individuals in each local group and can be defined by regional, country, or industry level \citep{bekaert2009international, chaieb2021factors, fama2012size, kose2003international, moench2013dynamic}.
 To account for different level factors, a multi-level factor structure has been developed.
See \cite{ando2015asset, ando2016panel, bai2015identification,  choi2018multilevel, han2021shrinkage} for more information.

Several large volatility matrix estimation procedures have been developed based on latent factor models to account for the strong cross-sectional correlation in the stock market \citep{ait2017using, fan2016incorporating, fan2018large, fan2019structured, jung2022next, wang2017asymptotics}. 
For instance, under the single-level factor model, \cite{fan2013large} proposed a principal orthogonal component thresholding (POET) covariance matrix estimation procedure when the factors are unobservable. 
This method can consistently estimate unobservable factors using PCA and cross-sectionally correlated errors via thresholding with a large number of assets.
Recently, to account for the latent local factor structure, \cite{choi2023large} developed a Double-POET covariance matrix estimation procedure based on the multi-level factor structure. 
Specifically, Double-POET is a two-step estimation procedure that estimates global and national factors separately by applying PCA at each factor level based on the block local factor structure.

The analysis of international financial markets is crucial for constructing global benchmark indices, such as the MSCI World.
When analyzing global financial market data, it is a common practice to use lower frequencies, such as two-day average, weekly, monthly, or quarterly data, instead of daily returns \citep{ando2017clustering, bekaert2009international, chib2006analysis, choi2023large, fama2012size, hou2011factors}. 
This is because international stock markets operate at different times, which results in returns being based on different information sets when measured on any given date over a short period, such as daily.
Hence, practitioners often opt for using weekly or monthly returns to mitigate the impact of non-synchronized trading hours in international markets. 
\cite{choi2023large} also used weekly returns to analyze the large global volatility matrix. 
However, using lower-frequency data can lead to a loss of information and less efficient estimators.
On the other hand, in high-frequency financial econometrics literature, several papers considered asynchronous financial data for covariance estimations by sample splitting method \citep{ait2010high, dai2019knowing, hautsch2012blocking, lunde2016econometric}.
See also \cite{burns1998correlations,  fan2019structured,  sun2022factor} for related articles.
However, unlike a high-frequency model setup, non-synchronized low-frequency observations can cause not only inefficiency but also inconsistency because a observation time gap is fixed.
Thus, it is important to study the non-synchronized trading hours in international markets based on low-frequency observations and develop an efficient and effective large global volatility matrix estimation procedure.

This paper proposes a novel large volatility matrix estimation procedure that incorporates observation structural information with an entire set of observations in the global and national factor model. 
Specifically, we consider the international financial market and impose a latent multi-level factor structure to account for both global and national risk factors. 
To handle the non-synchronized trading hours in the international market, we assume that the correlation is stationary with respect to the relative trading hour difference.
For example, if the proportion of the overlapped time goes to one, its corresponding correlation converges to the correlation of the synchronized time, which is the parameter of interest. 
This structure helps theoretically understand the non-synchronized trading hour problem in the international market, and we study the large global volatility matrix estimation problem under the proposed stationary  global and national factor model. 
For example, national factor membership is assumed to be naturally known, and we further assume that countries within the same continent have the same information set.
To estimate the global volatility matrix, we first apply the Double-POET procedure in each continental group using daily returns, which have the finest information in our model setup.
 To capture the spillover effect between continents, we conduct a low-rank approximation to each continental pair using a lower-frequency, which helps mitigate the effect of the non-synchronized trading hours. 
Finally, to accommodate the latent global factors, we employ PCA on the structurally fitted global factor components from previous procedures, which we call Structured Principal Orthogonal complEment Thresholding (Structured-POET). 
We derive the rates of convergence for the Structured-POET estimator under different matrix norms and discuss its benefits compared to the Double-POET procedure. 
For example, the Double-POET estimator may not be consistent using daily returns, while the Double-POET estimator with lower-frequency returns has a slower convergence rate than the Structured-POET estimator.
An empirical study on portfolio allocation also supports the theoretical findings and shows the benefit of the proposed Structured-POET estimator.

The remainder of the paper is organized as follows. 
In Section \ref{section2}, we introduce the model and propose the Structured-POET estimation procedure. 
Section \ref{asymp} provides an asymptotic analysis of the Structured-POET estimator. 
In Section \ref{simulation}, we conduct a simulation study to evaluate the finite sample performance of the proposed method. 
Section \ref{empiric} applies the proposed method to a real data problem of portfolio allocation using global stock market data. 
Finally, the conclusion is provided in Section \ref{conclusion}. 
All proofs and miscellaneous materials are presented in the online supplementary file.

\section{Model Setup and Estimation Procedure} \label{section2}

		Let $\lambda_{\min}(\bA)$ and $\lambda_{\max}(\bA)$ denote the minimum and maximum eigenvalues of matrix $\bA$, respectively. 
		 In addition, we denote by $\|\bA\|_{F}$, $\|\bA\|_{2}$ (or $\|\bA\|$ for short), $\|\bA\|_{1}$, $\|\bA\|_{\infty}$, and $\|\bA\|_{\max}$ the Frobenius norm,  operator norm, $l_{1}$-norm, $l_{\infty}$-norm and elementwise norm, which are defined, respectively, as $\|\bA\|_{F} = \tr^{1/2}(\bA'\bA)$, $\|\bA\|_{2} = \lambda_{\max}^{1/2}(\bA'\bA)$, $\|\bA\|_{1} = \max_{j}\sum_{i}|a_{ij}|$, $\|\bA\|_{\infty} = \max_{i}\sum_{j}|a_{ij}|$, and $\|\bA\|_{\max} = \max_{i,j}|a_{ij}|$. 
		 When $\bA$ is a vector, the maximum norm is denoted as $\|\bA\|_{\infty}=\max_{i}|a_{i}|$, and both $\|\bA\|$ and $\|\bA\|_{F}$ are equal to the Euclidean norm. We denote $\diag(\bA_{1},\ldots, \bA_{n})$ with the diagonal block entries as $\bA_{1},\ldots, \bA_{n}$.

\subsection{Multi-Level Factor Model} \label{setup}
We consider a global and national factor model \citep{choi2023large}:
			\begin{equation}\label{model}
				y_{it} = b_{i}'G_{t} + {\lambda_{i}^{l}}'f_{t}^{l}+ u_{it},  \text{ for } i = 1,\dots, p,  t = 1,\dots,T, \text{ and } l = 1, \dots, L,
			\end{equation}
where $y_{it}$ is the observed data for the $i$th individual belonging to country $l$ at time $t$;
$G_{t}$ is a $k \times 1$ vector of unobserved latent global factors, and $b_{i}$ is the global factor loadings; $f_{t}^{l}$ is an $r_{l} \times 1$ vector of unobserved latent national factors that affect individuals belonging to  country $l$, and $\lambda_{i}^{l}$ is the corresponding national factor loadings; and  $u_{it}$ is an idiosyncratic error term, which is uncorrelated with $G_{t}$ and $f_{t}^{l}$.  
Throughout the paper, global and national factors are uncorrelated, while their factor loadings may not be orthogonal to each other. 
In addition, we assume that the group membership of the national factors is known and the numbers of factors, $k$ and $r_l$, are fixed.

Due to the known national factor membership, we can stack the observations and write the model (\ref{model}) in a vector form as follows:
			\begin{equation}\label{model_vector}
			y_{t} = \bB G_{t} + \bLambda F_{t} +u_{t},
			\end{equation}
where  $y_{t} = ({y_{t}^{1}}',\dots, {y_{t}^{L}}')'$, where $y_{t}^l = (y_{(\sum_{j=0}^{l-1}p_{j}+1)t}, \dots, y_{(\sum_{j=0}^{l}p_{j})t})'$, $p_l$ is the number of assets within country $l$, and $p_0 =0$; the $p\times k$ matrix $\bB = (b_1, \dots, b_p)'$; the $p \times r$ block diagonal matrix $\bLambda = \diag(\bLambda^1,  \dots, \bLambda^L)$, where $\bLambda^{l} = (\lambda_{1}^{l}, \dots, \lambda_{p_{l}}^{l})'$ is a $p_l \times r_l$ matrix of local factor loadings for each $l$ such that $r = \sum_{l=1}^{L} r_l$; the $r \times 1$ vector $F_{t} = {(f_{t}^{1}}', \dots, {f_{t}^{l}}')'$, where $f_{t}^{l}$ is an $r_l \times 1$ vector of local factors; and $u_{t} = (u_{1t}, \dots, u_{pt})'$.
	
In this paper, we are interested in the $p \times p$ covariance matrix of $y_{t}$ and its inverse matrix:
			\begin{equation}\label{model2}
			\bSigma = \bB \cov(G_{t})\bB' + \bLambda \cov(F_{t})\bLambda' + \bSigma_{u} := \bSigma_{g} + \bSigma_{l} + \bSigma_{u},
			\end{equation}
where $\bSigma_{u}$ is a sparse idiosyncratic covariance matrix of $u_{t}$.
In particular, we measure the sparsity level of $\bSigma_{u} = (\sigma_{u,ij})_{p\times p}$ as follows \citep{bickel2008covariance, cai2011adaptive, gagliardini2016time, gagliardini2020estimation, rothman2009generalized}: for some $q \in [0,1)$,
			\begin{align} \label{sparsity}
				m_{p} = \max_{i\leq p}\sum_{j\leq p} |\sigma_{u,ij}|^{q}, 
			\end{align}
which diverges slowly, such as $\log p$. 
This implies that most pairs are weakly cross-sectionally correlated in the idiosyncratic error component. 
Decomposition \eqref{model2} is a multi-level factor-based covariance matrix.
Under the pervasive assumption, there are distinguished eigenvalues among the global factor components, the local factor components, and the idiosyncratic error components.
Hence, we can analyze the model by the presence of distinguished eigenvalues at different levels (see Section \ref{estimation procedure}).
We note that the correlation matrix of $y_{t}$ can be obtained by
			\begin{equation}\label{def-cor}
				\bR_{0}  = (\rho_{0,ij})_{p\times p} = \bD_{0}^{-\frac{1}{2}}\bSigma \bD_{0}^{-\frac{1}{2}},
			\end{equation}		
where $\bD_{0}$ is the diagonal matrix consisting of the diagonal elements of $\bSigma$.
Importantly, the correlation between stocks $i$ and $j$, denoted by $\rho_{0,ij}$ in \eqref{def-cor}, can be realized through synchronized observations.
However, in the context of the international stock market, each stock exchange operates its own trading hours. 
Hence, stocks traded on different exchanges have distinct observation time points, making it challenging to estimate $\rho_{0,ij}$ if stocks $i$ and $j$ are not in the same region.
To address this issue, we introduce the following model structure.
The $i$th observations in region $s$ is $\{y_{i,t+\delta_{s}}\}_{t=1}^{T}$, and $\delta_{s}\in [0,1)$ is the market close time for $s \in \{1,\dots,S\}$.
We assume that $\{(y_{1,t+\delta_{1}},\dots, y_{p,t+\delta_{S}})'\}_{t\geq 1}$ is stationary.
Denote the estimable correlation by $\rho_{h,ij}$ for assets $i$ and $j$ that are located in regions $s, q \in \{1,\dots, S\}$, respectively, where  the relative time difference $h= \frac{|\delta_{s} - \delta_{q}|}{d}$ and the window size of frequency $d = T^{1-\alpha}$ for $\alpha \in (0,1]$.
 Finally, we impose the following Lipschitz condition for $\rho_{h,ij}$: 
\begin{equation}
	|\rho_{h,ij} - \rho_{0,ij}| \leq Ch^{\beta}, \label{lipschitz}
\end{equation}
for some $\beta > 0$ and a positive constant $C$.  
This model setup provides a mathematical framework to understand a fraction of the global market.
For example, from the proposed model setup perspective, when using daily return data, $\rho_{h,ij}$ does not converge to the synchronized correlation $\rho_{0,ij}$.
In contrast, when using lower-frequency data, $\rho_{h,ij}$ converges to  $\rho_{0,ij}$.
Thus, in practice, researchers often use weekly or monthly returns instead of daily returns to mitigate the effect of different trading hours based on daily transaction prices \citep{ando2017clustering, bekaert2009international, chib2006analysis, fama2012size, hou2011factors}. 
However, this causes inefficiency.
We discuss this inefficiency theoretically in Section \ref{asymp}.

In this paper, for simplicity, we assume that stocks in the same continent have the same observation time points; hence, regional membership is the continent. 
Naturally, regional membership is known.
In addition, we assume that the number of regions, $S$, is fixed.
Given the regional membership, we can stack the observations by country within each continent.  
Then, we define the ``estimable" correlation matrix as follows:
			\begin{align}\label{R_h}
				\bR_{h} = \begin{bmatrix}
					\bR_{0,11} & \bR_{h,12} & \cdots & \bR_{h,1S} \\
					\bR_{h,21} & \bR_{0,22} & \cdots & \bR_{h,2S} \\
					\vdots          & \vdots          & \ddots & \vdots          \\
					\bR_{h,S1} & \bR_{h,S2} & \cdots & \bR_{0,SS}
				\end{bmatrix},
			\end{align}			
where $\bR_{0,ss} =(\rho_{0,ij})_{p_{s} \times p_{s}}$ and $\bR_{h,sq} =(\rho_{h,ij})_{p_{s} \times p_{q}}$ for  $s, q \in \{1,\dots, S\}$. 
For simplicity, we use the subscript notation $h$, which is a function of $i$, $j$, and $d$.
We note that, for $s\neq q$, $\bR_{h,sq}$ represents the spillover effect between continents $s$ and $q$, and we assume that its rank is  $k^{*}_{sq}$.
Moreover, without loss of generality, each rank is at most equal to the number of global factors (i.e., $k^{*}_{sq} \leq k$).
We denote the corresponding covariance matrix by $\bSigma_{h} = \bD_{0}^{\frac{1}{2}} \bR_{h} \bD_{0}^{\frac{1}{2}}$.

Let $\bSigma^{s}$ be the covariance matrix for continent $s$, which is a $p_{s} \times p_{s}$ diagonal block of $\bSigma$.
We then decompose $\bSigma^{s}$ as follows:
			\begin{equation}\label{continent cov}
				\bSigma^{s} = \bSigma_{g}^{s} + \bSigma_{l}^{s} + \bSigma_{u}^{s}, \;\;\;\; \text{ for } s = 1, \dots, S, 
			\end{equation}
where $\bSigma_{g}^{s}$ is the global factor component, $\bSigma_{l}^{s}$ is the national factor component, and $\bSigma_{u}^{s}$ is the idiosyncratic component.
The equation \eqref{continent cov} represents the multi-level factor-based covariance matrix.
Thus, when we consider markets that have the same observation time, we can directly apply the Double-POET procedure proposed by \cite{choi2023large} with all possible observations (i.e., $d=1$) to estimate the covariance matrix $\bSigma^{s}$ (see Section A.1 in the online supplement for the specific estimation procedure).
However, when analyzing the global stock market,  lower-frequency data is often used to avoid the issue of varying trading hours,  which causes inefficiency.
To handle this issue, we propose a novel procedure for estimating large global covariance matrix $\bSigma$ in the following subsection, which incorporates the structural information with an entire set of observations.

\subsection{Structured-POET Procedure}  \label{estimation procedure}

Following  \cite{choi2023large}, we assume the canonical conditions that $\cov(G_{t}) = \bI_{k}$, $\cov(f_{t}^{l}) = \bI_{r_l}$, and  $\bB'\bB$ and ${\bLambda^{l}}'\bLambda^{l}$  are diagonal matrices for $l \in \{1, \dots, L\}$. 
We note that $G_{t}$ and $f_{t}^{l}$ are uncorrelated with each other.
Let $a_{1}, a_{2} \in (0,1]$ be the strengths of global and local factors, respectively.
We then impose the following pervasiveness conditions: for each $l$, the eigenvalues of $p^{-a_{1}}\bB'\bB$ and $p_{l}^{-a_{2}}{\bLambda^{l}}'\bLambda^{l}$ are distinct and bounded away from zero.
This condition implies that the first $k$ eigenvalues of $\bB \cov(G_{t})\bB'$ diverge at rate $O(p^{a_{1}})$, while the first $r$ eigenvalues of $ \bLambda \cov(F_{t})\bLambda'$ diverge at rate $O(p^{ca_{2}})$, where $a_{1} > ca_{2}$.
For $c \in (0,1]$, we note that $p^c \asymp p_{l}$ for each country $l$.
Also, all eigenvalues of $\bSigma_u$ are bounded.

To incorporate the structure of the global financial market discussed in Section \ref{setup}, we propose a Structured-POET procedure to estimate $\bSigma$ as follows:
\begin{enumerate}
\item For each continent $s$, we compute the Double-POET estimator \citep{choi2023large} using $T$ observations and denote it as $\widehat{\bSigma}^{s,\mathcal{D}} \equiv \widehat{\bSigma}_{g}^{s,\mathcal{D}} + \widehat{\bSigma}_{l}^{s,\mathcal{D}} + \widehat{\bSigma}_{u}^{s,\mathcal{D}}$. 
The specific procedure is described in Appendix A.1. 
Let $\widetilde{\bSigma}_{g}^\mathcal{D} = \diag(\widehat{\bSigma}_{g}^{1,\mathcal{D}},\dots,\widehat{\bSigma}_{g}^{S,\mathcal{D}})$, $\widetilde{\bSigma}_{l}^\mathcal{D} = \diag(\widehat{\bSigma}_{l}^{1,\mathcal{D}},\dots,\widehat{\bSigma}_{l}^{S,\mathcal{D}})$, and $\widetilde{\bSigma}_{u}^\mathcal{D} = \diag(\widehat{\bSigma}_{u}^{1,\mathcal{D}},\dots,\widehat{\bSigma}_{u}^{S,\mathcal{D}})$.
Then, we construct a block diagonal matrix
\begin{equation}\label{diag_cov}
		\widetilde{\bSigma}^\mathcal{D} = \diag(\widehat{\bSigma}^{1,\mathcal{D}},\dots,\widehat{\bSigma}^{S,\mathcal{D}}) \equiv \widetilde{\bSigma}_{g}^\mathcal{D} + \widetilde{\bSigma}_{l}^\mathcal{D} +\widetilde{\bSigma}_{u}^\mathcal{D}.
\end{equation}
\item Given a sample covariance matrix using $d$-day return data, $\widehat{\bSigma}_{h} = T^{-\alpha}\sum_{t=1}^{T^{\alpha}}(y_{t} - \bar{y})(y_{t} - \bar{y})'$, we compute the sample correlation matrix $\widehat{\bR}_{h} = \widehat{\bD}_{h}^{-\frac{1}{2}}\widehat{\bSigma}_{h} \widehat{\bD}_{h}^{-\frac{1}{2}}$, where $\widehat{\bD}_{h}$ is the diagonal matrix consisting of the diagonal elements of $\widehat{\bSigma}_{h}$. 
We denote the sample correlation matrix $\widehat{\bR}_{h}$ as the following block matrix form:
		$$ 
		\widehat{\bR}_{h} = (\widehat{\rho}_{h,ij})_{p\times p} = \begin{bmatrix}
			\widehat{\bR}_{h,11} & \widehat{\bR}_{h,12} & \cdots & \widehat{\bR}_{h,1S} \\
			\widehat{\bR}_{h,21} & \widehat{\bR}_{h,22} & \cdots & \widehat{\bR}_{h,2S} \\
			\vdots          & \vdots          & \ddots & \vdots          \\
			\widehat{\bR}_{h,S1} & \widehat{\bR}_{h,S2} & \cdots & \widehat{\bR}_{h,SS}
		\end{bmatrix}.
		$$
For each $(s,q)$th off-diagonal partitioned block, we conduct the best rank-$k^{\ast}_{sq}$ matrix approximation  to $\widehat{\bR}_{h,sq}$ such that $\widehat{\bTheta}_{sq} = \sum_{i=1}^{k^{*}_{sq}}\widehat{\xi}_{i}\widehat{u}_{i}\widehat{w}_{i}^{\prime}$, where $\{\widehat{\xi}_{i}, \widehat{u}_{i}, \widehat{w}_{i}\}_{i=1}^{p_s \wedge p_q}$ are the ordered singular values, left-singular and right-singular vectors of $\widehat{\bR}_{h,sq}$ in decreasing order.
Then, we define		
		$$
		\widehat{\bTheta} = \begin{bmatrix}
			\mathbf{0} & \widehat{\bTheta}_{12} & \cdots & \widehat{\bTheta}_{1S} \\
			\widehat{\bTheta}_{21} & \mathbf{0}  & \cdots & \widehat{\bTheta}_{2S} \\
			\vdots          & \vdots          & \ddots & \vdots          \\
			\widehat{\bTheta}_{S1} & \widehat{\bTheta}_{S2} & \cdots & \mathbf{0} 
		\end{bmatrix}.
		$$
		
\item  Let $\widetilde{\delta}_{1} \geq \widetilde{\delta}_{2} \geq \dots \geq \widetilde{\delta}_{k}$ be the $k$ largest eigenvalues  of $\widetilde{\bSigma}_{g} = (\widetilde{\bSigma}_{g}^\mathcal{D} +  \widehat{\bD}^{\frac{1}{2}}\widehat{\bTheta} \widehat{\bD}^{\frac{1}{2}})$ and $\{\widetilde{v}_{i}\}_{i=1}^{k}$ be their corresponding eigenvectors, where $\widehat{\bD}$ is the diagonal matrix consisting of the diagonal elements of \eqref{diag_cov}.
The final estimator of $\bSigma$ is then defined as
		\begin{equation}\label{estimator}
			\widehat{\bSigma}^{\mathcal{S}} =  \widetilde{\bV}_{g}\widetilde{\bGamma}_{g}\widetilde{\bV}_{g}' +  \widehat{\bSigma}_{E}^\mathcal{S},
		\end{equation}
where $\widetilde{\bGamma}_{g} = \diag(\widetilde{\delta}_{1}, \dots, \widetilde{\delta}_{k})$, $\widetilde{\bV}_{g}=(\widetilde{v}_{1},\dots, \widetilde{v}_{k})$, $\widehat{\bSigma}_{E}^{\mathcal{S}} = \widetilde{\bSigma}_{l}^\mathcal{D} +\widetilde{\bSigma}_{u}^\mathcal{D}$, and $\widetilde{\bSigma}_{l}^\mathcal{D}$ and $\widetilde{\bSigma}_{u}^\mathcal{D}$ are defined in \eqref{diag_cov}. 		
\end{enumerate}

\begin{remark}\label{rankchoice}
	To implement Structured-POET, we need to determine the rank $k^{*}_{sq}$ and the number of factors, which are unknown in practice.
	We note that each $(s,q)$th off-diagonal partitioned block $\bR_{h,sq}$ in \eqref{R_h} is a low-rank matrix, and each rank is less than or equal to the number of global factors (i.e., $k^{*}_{sq} \leq k$).
	Thus, to determine the rank and number of global factors, we can use the data-driven methods proposed by \citet{ahn2013eigenvalue, bai2002determining, fortin2023, gagliardini2019diagnostic, onatski2010determining}.
	For example, the rank $k^{*}_{sq}$ can be determined by finding the largest singular value gap such that $\max_{i \leq \bar{k}_{sq}} (\widehat{\xi}_{i} - \widehat{\xi}_{i+1})$, where $\bar{k}_{sq} = \min\{p_{s}, p_{q}\}$.
	On the other hand, to consistently estimate $k$, we employ the modified version of the eigenvalue ratio method, introduced by \cite{choi2023large}, based on $\widehat{\bSigma}_{h}$.
\end{remark}

This procedure is called Structured-POET.
Structured-POET efficiently estimates global and local factor components by utilizing all observations and considering the block structure of the local factors.
In contrast, the Double-POET or POET estimator may not be efficient due to the loss of the observation structural information.
In Section \ref{asymp}, we discuss the theoretical inefficiency of Double-POET.
Also, the numerical studies in Sections \ref{simulation} and \ref{empiric} show that Structured-POET outperforms Double-POET and POET.

\section{Asymptotic Properties} \label{asymp}

This section establishes the asymptotic properties of the proposed Structured-POET estimator. 
To investigate asymptotic behaviors, we require the following technical assumption.

\begin{assum} \label{assum1} ~
\begin{itemize}
\item[(i)]  $\cov(G_{t}) = \bI_{k}$, $\cov(f_{t}^{l}) = \bI_{r_l}$, and  $\bB'\bB$ and ${\bLambda^{l}}'\bLambda^{l}$  are diagonal matrices for $l \in \{1, \dots, L\}$. In addition, $G_{t}$ and $f_{t}^{l}$ are uncorrelated with each other.
\item[(ii)] 
For some constants $c \in (0,1]$, $a_{1} \in (\frac{3+2c}{5},1]$, and $a_{2} \in (\frac{3}{5},1]$, all  eigenvalues of $\bB'\bB/p^{a_{1}}$ and $\bLambda^{l\prime}\bLambda^{l}/p_{l}^{a_{2}}$ are strictly bigger than zero as $p, p_{l} \rightarrow \infty$, for $l \in \{1,\dots,L\}$.
In addition, $p_{l} \asymp p^{c}$, for each country $l$, and $a_{1}\geq ca_{2}$.
There is a constant $C>0$ such that $\|\bB\|_{\max}\leq C$ and $\|\bLambda\|_{\max} \leq C$.

\item[(iii)] There exist constants $C_{1}, C_{2} > 0$ such that $\lambda_{\min}(\bSigma_{u})>C_{1}$ and $\|\bSigma_{u}\|_{1} \leq C_{2}m_{p}$.

\item[(iv)] 
Let $d = T^{1-\alpha}$ for $\alpha \in (0,1)$.
The sample correlation matrix using $d$-day return data, $\widehat{\bR}_{h} = \widehat{\bD}_{h}^{-\frac{1}{2}}\widehat{\bSigma}_{h} \widehat{\bD}_{h}^{-\frac{1}{2}}$, where $\widehat{\bD}_{h}$ is the diagonal matrix consisting of the diagonal elements of $\widehat{\bSigma}_{h} = T^{-\alpha}\sum_{t=1}^{T^{\alpha}}(y_{t} - \bar{y})(y_{t} - \bar{y})'$, satisfies
\begin{equation*}
	\|\widehat{\bR}_{h}-\bR_{h}\|_{\max} = O_{P}(\sqrt{\log p /T^{\alpha}}).  \label{max norm}
\end{equation*}

\item[(v)]
Denote $\bSigma = (\Sigma_{ij})_{p \times p}$. The sample covariance matrix using $T$ observations, $\widehat{\bSigma} = T^{-1}\sum_{t=1}^{T}(y_{t} - \bar{y})(y_{t} - \bar{y})' = (\widehat{\Sigma}_{ij})_{p\times p}$, satisfies that, for  $s \in \{1,\dots, S\}$,
$$
\max_{\{i,j\} \in s}|\widehat{\Sigma}_{ij} - \Sigma_{ij}| = O_{P}(\sqrt{\log p /T}).
$$
\end{itemize}
\end{assum}

\begin{remark}
Assumption \ref{assum1}(i) is the conventional normalization condition in the factor model literature.
Assumption \ref{assum1}(ii) is known as the factor pervasiveness assumption, which is closely related to the incoherence structure \citep{fan2018eigenvector}. 
This assumption can hold in macroeconomic and financial applications and is used for analyzing low-rank matrices \citep{bai2003inferential, Chamberlain1983, fan2013large, fan2016incorporating,  lam2012factor, stock2002forecasting}. 
Specifically, we allow the global and local factors to be weak by imposing technical conditions on $a_{1}$ and $a_{2}$.
Intuitively, their lower bounds imply that both factors should have enough signals to satisfy the pervasive condition at different levels (see \citealp{choi2023large}).
Assumptions \ref{assum1}(iv)-(v) provide a high-level sufficient condition for analyzing large matrices.
The sample correlation matrix with $d$-day return data serves as the initial estimator for $\bR_{h}$ in Assumption \ref{assum1}(iv). 
This condition is required to account for the spillover effect between continents.
Here, the correlation matrix is considered to overcome the amplified scale issue of using the sample covariance matrix based on lower-frequency data (see Remark A.1 in the online supplement).
On the other hand, we can impose the element-wise convergence condition for each continent using the sample covariance matrix based on all observations (Assumption \ref{assum1}(iv)).
These element-wise convergence rate conditions are easily satisfied under the sub-Gaussian condition and mixing time dependency (\citealp{fan2018large, fan2018eigenvector, vershynin2010introduction, wang2017asymptotics}). 
It can also be satisfied under heavy-tailed observations with bounded fourth moments \citep{fan2017estimation, fan2018large, fan2018eigenvector, fan2021shrinkage}.
\end{remark}


We obtain the following convergence rates for Structured-POET and its inverse under various norms.
\begin{thm}\label{thm1}
	Suppose that $m_{p} = o(p^{c(5a_{2}-3)/2})$ and Assumption \ref{assum1} holds. 
	Let $\omega_{T} =p^{\frac{5}{2}(1-a_{1})+\frac{5}{2}c(1-a_{2})}\sqrt{\log p/T}+1/p^{\frac{5}{2}a_{1}-\frac{3}{2}+c(\frac{5}{2}a_{2}-\frac{7}{2})} +m_{p}/\sqrt{p^{c(5a_{2}-3)}}$. 
	If $m_{p}\omega_{T}^{1-q} = o(1)$, we have
				\begin{align}
							&\|\widehat{\bSigma}^{\mathcal{S}} - \bSigma\|_{\max} = O_{P}\left(\omega_{T}+p^{5(1-a_{1})}\Big(\sqrt{\frac{\log p}{T^{\alpha}}} + \frac{1}{T^{(1-\alpha)\beta}}\Big) + \frac{1}{p^{5a_{1}-4-c}}\right), \label{maxnorm_SPOET}\\
							& \|(\widehat{\bSigma}^{\mathcal{S}})^{-1} - \bSigma^{-1}\| = O_{P}\left(m_{p}\omega_{T}^{1-q}+p^{\frac{c}{2}(1-a_{2})}\omega_{T}+p^{\frac{11}{2}(1-a_{1})}\Big(\sqrt{\frac{\log p}{T^{\alpha}}} + \frac{1}{T^{(1-\alpha)\beta}}\Big) + \frac{1}{p^{\frac{11}{2}a_{1}-\frac{9}{2}-c}}\right).
							\label{inverse_SPOET}
				\end{align}
	In addition, if $a_{1}>\frac{3}{4}$ and $a_{2}>\frac{3}{4}$, we have
				\begin{align}
							\|\widehat{\bSigma}^{\mathcal{S}} - \bSigma\|_{\Sigma} =&O_{P}\Big(m_{p}\omega_{T}^{1-q} + p^{\frac{7}{2}(1-a_{1})}\big(\sqrt{\frac{\log p}{T^{\alpha}}} + \frac{1}{T^{(1-\alpha)\beta}}\big)  + \frac{1}{p^{\frac{7}{2}a_{1}-\frac{5}{2}-c}}\nonumber\\
							& \qquad\qquad +p^{\frac{21}{2}-10a_{1}}\big(\frac{\log p}{T^\alpha} +\frac{1}{T^{2(1-\alpha)\beta}}\big)+ \frac{1}{p^{10a_{1}-\frac{17}{2}-2c}} + \frac{m_{p}^{2}}{p^{5ca_{2}-3c - \frac{1}{2}}}\Big), \label{relativefro_SPOET}	
				\end{align}
    where the relative Frobenius norm is $\|\widehat{\bSigma}-\bSigma\|_{\Sigma} = p^{-1/2}\|\bSigma^{-1/2}\widehat{\bSigma}\bSigma^{-1/2}-\bI_{p}\|_{F}.$
\end{thm}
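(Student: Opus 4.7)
My plan is to start from the decomposition
$$\widehat{\bSigma}^{\mathcal{S}} - \bSigma = \bigl(\widetilde{\bV}_{g}\widetilde{\bGamma}_{g}\widetilde{\bV}_{g}' - \bSigma_{g}\bigr) + \bigl(\widetilde{\bSigma}_{l}^{\mathcal{D}} - \bSigma_{l}\bigr) + \bigl(\widetilde{\bSigma}_{u}^{\mathcal{D}} - \bSigma_{u}\bigr).$$
The last two block-diagonal terms come out of the continent-wise Double-POET analysis of Choi and Kim (2023): under Assumption \ref{assum1}(i),(ii),(iv) the within-continent local-factor and thresholded-idiosyncratic parts produce the $\omega_{T}$ factor in (\ref{maxnorm_SPOET}) and the $m_{p}\omega_{T}^{1-q}$ factor in (\ref{inverse_SPOET})-(\ref{relativefro_SPOET}), with the powers of $p$ tracked through $p_{s}\asymp p^{c}$ and $p_{l}\asymp p^{c}$ per country. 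I would import these bounds as a black box and focus the new work on the first term.

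For the first term, write $\widetilde{\bSigma}_{g} = \widetilde{\bSigma}_{g}^{\mathcal{D}} + \hat D^{1/2}\widehat{\bTheta}\hat D^{1/2}$ and compare it to $\bSigma_{g} = \bB\bB'$, whose off-diagonal $(s,q)$ block equals $D_{0,s}^{1/2}R_{0,sq}D_{0,q}^{1/2}$ with $R_{0,sq}$ of rank $k^{*}_{sq}\leq k$. Because $\widehat{\Theta}_{sq}$ is the truncated rank-$k^{*}_{sq}$ SVD of $\widehat{R}_{h,sq}$, the standard best-rank-$r$ perturbation inequality gives $\|\widehat{\Theta}_{sq} - R_{0,sq}\| \leq 2\|\widehat{R}_{h,sq} - R_{0,sq}\|$. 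Combining Assumption \ref{assum1}(iii) with the Lipschitz bound (\ref{lipschitz}) evaluated at $h\asymp T^{-(1-\alpha)}$ yields $\|\widehat{R}_{h,sq}-R_{0,sq}\|_{\max}=O_{P}(\sqrt{\log p/T^{\alpha}}+T^{-(1-\alpha)\beta})$, and multiplying both sides by $\hat D^{1/2}$ inflates the max norm by a diagonal factor of order $p^{a_{1}-1}$ under Assumption \ref{assum1}(i). The diagonal blocks of $\widetilde{\bSigma}_{g}^{\mathcal{D}}-\bSigma_{g}^{s}$ are bounded by the global-factor component of Double-POET. Together this controls $\|\widetilde{\bSigma}_{g}-\bSigma_{g}\|_{\max}$ and, via the low-rank structure, also $\|\widetilde{\bSigma}_{g}-\bSigma_{g}\|$.

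The PCA step is where the powers of $p$ are amplified. Since $\bSigma_{g}$ has $k$ nonzero eigenvalues of order $p^{a_{1}}$ and a spectral gap of the same order, Davis-Kahan gives $\|\widetilde{\bV}_{g}\widetilde{\bV}_{g}'-\bV_{g}\bV_{g}'\|=O(\|\widetilde E\|/p^{a_{1}})$ with $\widetilde E:=\widetilde{\bSigma}_{g}-\bSigma_{g}$, while Weyl yields $|\widetilde\delta_{i}-\delta_{i}|\leq\|\widetilde E\|$. Plugging these into the identity $\widetilde{\bV}_{g}\widetilde{\bGamma}_{g}\widetilde{\bV}_{g}'-\bSigma_{g} = \widetilde{\bV}_{g}(\widetilde{\bGamma}_{g}-\bGamma_{g})\widetilde{\bV}_{g}' + (\widetilde{\bV}_{g}-\bV_{g})\bGamma_{g}\widetilde{\bV}_{g}'+\bV_{g}\bGamma_{g}(\widetilde{\bV}_{g}-\bV_{g})'$, and invoking the entrywise eigenvector perturbation of Fan, Wang and Zhong (2018) together with the incoherence $\|\bB\|_{\max}\leq C$ to obtain $\|\bV_{g}\|_{\max}=O(p^{-1/2})$, converts the spectral error $\|\widetilde E\|$ into the max-norm bound. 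Tracking these amplifications carefully reproduces the three pieces of (\ref{maxnorm_SPOET}): the Double-POET piece $\omega_{T}$, the low-frequency piece $p^{5(1-a_{1})}\bigl(\sqrt{\log p/T^{\alpha}}+T^{-(1-\alpha)\beta}\bigr)$ arising from two diagonal scalings combined with the squared eigenvector perturbation, and the bias $1/p^{5a_{1}-4-c}$.

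For (\ref{inverse_SPOET}) I would apply Sherman-Morrison-Woodbury to $\widehat{\bSigma}^{\mathcal{S}}=\widehat{\bSigma}_{E}^{\mathcal{S}}+\widetilde{\bV}_{g}\widetilde{\bGamma}_{g}\widetilde{\bV}_{g}'$, using that $\widehat{\bSigma}_{E}^{\mathcal{S}}$ inherits invertibility and a well-conditioned spectrum from the block-diagonal $\bSigma_{l}+\bSigma_{u}$ part of Assumption \ref{assum1}, as in Choi and Kim (2023); the additional spectral amplification when inverting accounts for the passage from $5/2$ and $5$ powers to $11/2$. For (\ref{relativefro_SPOET}) I split $p^{-1/2}\|\bSigma^{-1/2}(\widehat{\bSigma}^{\mathcal{S}}-\bSigma)\bSigma^{-1/2}\|_{F}$ into a linear and a quadratic piece, where the $a_{1},a_{2}>3/4$ assumption ensures the sums across continents converge and the quadratic terms $p^{21/2-10a_{1}}(\cdot)^{2}$ and $m_{p}^{2}/p^{5ca_{2}-3c-1/2}$ arise from simply squaring the linear rates. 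The main obstacle will be delivering a tight $\|\cdot\|_{\max}$ bound on the PCA output $\widetilde{\bV}_{g}\widetilde{\bGamma}_{g}\widetilde{\bV}_{g}'-\bSigma_{g}$ despite the input $\widetilde{\bSigma}_{g}$ mixing high-accuracy diagonal blocks of order $\omega_{T}$ with lower-accuracy off-diagonal blocks of order $p^{a_{1}-1}\sqrt{\log p/T^{\alpha}}$; the asymmetric perturbation must be carried through Davis-Kahan and the entrywise eigenvector bound without losing the factors that make the final rate meaningful.
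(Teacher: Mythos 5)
Your overall architecture matches the paper's: the same three-term decomposition, importing the continent-wise Double-POET rates for the local and thresholded idiosyncratic blocks, an SVD/PCA perturbation analysis for the global block, Sherman--Morrison--Woodbury for the inverse, and the split $\Delta_{G}+\Delta_{L}+\Delta_{S}$ under the relative Frobenius norm. However, there is a genuine gap in how you control the off-diagonal low-rank approximation. The inequality $\|\widehat{\Theta}_{sq}-R_{0,sq}\|\leq 2\|\widehat{R}_{h,sq}-R_{0,sq}\|$ is an operator-norm (Eckart--Young type) statement; it does \emph{not} transfer to the elementwise norm, and what the subsequent entrywise eigenvector perturbation (Fan--Wang--Zhong, Theorem 1) requires as input is a bound on $\|\widehat{\Theta}_{sq}-R_{0,sq}\|_{\max}$ (equivalently on $\|\cdot\|_{\infty}$ after multiplying by $p$). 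Passing through the operator norm costs you a factor of order $p$ relative to $\|\widehat{R}_{h,sq}-R_{0,sq}\|_{\max}$, which destroys the stated rates. The paper avoids this by carrying out an entrywise SVD perturbation analysis of $\widehat{R}_{h,sq}$ itself (Lemma~\ref{offblocks}): Weyl's inequality for the singular values, the $\ell_{\infty}$ singular-vector bound of \cite{fan2018eigenvector} for $\widehat{u}_{i},\widehat{w}_{i}$, and then reassembly of $\widehat{\bU}\widehat{\Xi}\widehat{\bW}'$ term by term, which is what produces the factor $p^{\frac{5}{2}(1-a_{1})}$ in $\|\widehat{\Theta}_{sq}-R_{0,sq}\|_{\max}$. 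You would need to replace your best-rank-$r$ shortcut with this entrywise argument.

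A second, smaller error: multiplying by $\hat{D}^{1/2}$ does not inflate the max norm by $p^{a_{1}-1}$. Under Assumption~\ref{assum1}(i) the diagonal entries of $\bSigma$ (hence of $D$ and, by Assumption~\ref{assum1}(iv), of $\hat{D}$) are bounded, so this rescaling is $O_{P}(1)$ entrywise; see \eqref{offblock_max}. The powers of $p$ in \eqref{maxnorm_SPOET} arise instead from the coherence factor $p^{2(1-a_{1})}$ and the eigengap $p^{a_{1}}$ in the entrywise perturbation bounds applied twice (once for the singular vectors of the off-diagonal blocks, once for the eigenvectors of $\widetilde{\bSigma}_{g}$), followed by the $\sqrt{p^{a_{1}}}\,\|\widetilde{\bV}_{g}-\widetilde{\bV}\|_{\max}$ term in the final reassembly. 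Your attribution of $p^{5(1-a_{1})}$ to ``two diagonal scalings combined with the squared eigenvector perturbation'' is therefore not the correct mechanism, and without the correct bookkeeping the bias terms $1/p^{5a_{1}-4-c}$ and $1/p^{\frac{11}{2}a_{1}-\frac{9}{2}-c}$ cannot be recovered.
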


\begin{remark}\label{remark3.2}
	  $\omega_{T}$ is related to  the estimation of latent local factors and idiosyncratic components using $T$ observations.
	The additional terms $\sqrt{\log p/T^{\alpha}}$ and $1/T^{(1-\alpha)\beta}$ are the cost to handle the non-synchronized trading hours, when estimating latent global factors. 
	In particular, the first term is coming from sub-sampled observations, $T^{\alpha} = T/d$, while  the second term is the cost to estimate the synchronized correlation matrix $\bR_0$. 
	The optimal choice of $\alpha$ is $\alpha^{\ast} = \frac{2\beta}{1+2\beta}$, which simultaneously minimizes the convergence rates.
	This implies that the choice of frequency window $d$ is important in practice.
	In the numerical study, we use the weekly data, that is, $d=5$. 
\end{remark}

For simplicity, consider the case of strong global and local factors (i.e., $a_{1}=1$ and $a_{2} =1$),  $q=0$, and $m_{p} = O(1)$.
Then, the proposed Structured-POET method yields
			\begin{align*}
				\|\widehat{\bSigma}^{\mathcal{S}} - \bSigma\|_{\Sigma} =&O_{P}\left(\sqrt{\frac{\log p}{T^{\alpha}}} + \frac{1}{T^{(1-\alpha)\beta}}  + \frac{1}{p^{1-c} + p^{c}} +\sqrt{p}\Big(\frac{\log p}{T^\alpha} +\frac{1}{T^{2(1-\alpha)\beta}}\Big)+ \frac{1}{p^{\frac{3}{2}-2c}} + \frac{1}{p^{2c - \frac{1}{2}}}\right),
			\end{align*}
which can be convergent as long as $p = o(T^{\alpha})$ and $\frac{1}{4} < c < \frac{3}{4}$.
Similar to the Double-POET estimator \citep{choi2023large}, the upper and lower bounds of $c$ are required to estimate both global and national factor components.

To compare Structured-POET and Double-POET, we derive the convergence rate of Double-POET  (see Section A.2 in the online supplement). 
For simplicity, consider $m_{p} = O(1)$, $a_{1} =1$, and $a_{2}=1$, and ignore the log order terms.
Define the optimal $\alpha^{\ast} = \frac{2\beta}{1+2\beta}$ (see Remark \ref{remark3.2}).
With $\alpha = \alpha^{\ast}$, we have 
\begin{align*}
	\|\widehat{\bSigma}^{\mathcal{D}}-\bSigma\|_{\Sigma} &= O_{P}\left (\left (\frac{1}{T^{\frac{\beta}{1+2\beta}}}+ \frac{1}{p^{1-c}} + \frac{1}{p^{c}}\right)^{1-q} +  \frac{\sqrt{p}}{T^{\frac{2\beta}{1+2\beta}}}+ 
	\frac{1}{p^{\frac{3}{2}-2c}} +\frac{1}{p^{2c-\frac{1}{2}}} \right),\\
	\|\widehat{\bSigma}^{\mathcal{S}} - \bSigma\|_{\Sigma} 
	&=O_{P}\left(\left(\frac{1}{\sqrt{T}} + \frac{1}{p^{1-c}} + \frac{1}{p^{c}}\right)^{1-q} + \frac{1}{T^{\frac{\beta}{1+2\beta}}}  +\frac{\sqrt{p}}{T^{\frac{2\beta}{1+2\beta}}} + \frac{1}{p^{\frac{3}{2}-2c}} + \frac{1}{p^{2c - \frac{1}{2}}}\right),
\end{align*}
where $\widehat{\bSigma}^{\mathcal{D}}$ is the Double-POET estimator defined in Section A.1 of the online supplement.
Specifically, when $q \neq 0$, Structured-POET achieves a faster convergence rate under the relative Frobenius norm.
This is because utilizing all observations enhances the estimation accuracy of each block diagonal matrix.
However, when $q = 0$, the convergence rates of both estimators are the same.
This is because the estimation error of the correlations between continents dominates the benefit mentioned above.
Importantly, we note that this does not mean that their estimation errors are exactly the same. 
In fact, based on our simulation study, we can conjecture that Structured-POET has smaller convergence rates than Double-POET for $q=0$.
That is, the relative ratio of the convergence rate of Structured-POET with respect to that of Double-POET may be less than 1. 
Unfortunately, due to the complex upper bound calculations used to handle high-dimensional matrices, we cannot theoretically  show this statement for $q=0$. 
We leave this for a future study. 
Similarly, under the spectral norm for the inverse matrix, the convergence rate of Structure-POET can be faster than that of Double-POET when $q \neq 0$.

\section{Simulation Study} \label{simulation}

In this section, simulations are carried out to examine the finite sample performance of the proposed Structured-POET method. 
We considered the true covariance as $\bSigma = \bB\bB' + \bLambda\bLambda '+ \bSigma_{u}$, where
each row of $\bB$ was drawn from $\mathcal{N}(\mu_{B},\bI_{k})$, where each element of $\mu_{B}$ is i.i.d. Uniform$(-0.5,0.5)$; for $\bLambda = \diag(\Lambda^{1},\dots, \bLambda^{l})$, each row of $\bLambda^{l}$ for each $l$ was drawn from $\mathcal{N}(\mu_{\bLambda^{l}}, \bI_{r_{l}})$, where each element of $\mu_{\Lambda^l}$ is i.i.d. Uniform$(-0.3,0.3)$. 
We generated $\bSigma_{u}$ as follows.
Let $\bD_{u} = \diag(d_{1},\dots,d_{p})$, where each $\{d_{i}\}$ was generated independently from Uniform$(0.5,1.5)$. 
Let $\pi = (\pi_{1},\dots, \pi_{p})'$ be a sparse vector, where each $\pi_{i}$ was drawn from $\mathcal{N}(0,1)$ with probability $\frac{0.5}{\sqrt{p}\log{p}}$, and $\pi_{i} = 0$ otherwise. 
Then, we set $\bSigma_u = \bD_{u} + \pi\pi' - \diag\{\pi_{1}^{2},\dots,\pi_{p}^{2}\}$. 
In the simulation, we generated $\bSigma_u$ until it was positive definite.

Let $\bD$ be the diagonal matrix consisting of the diagonal elements of $\bSigma$. 
We then obtained the true correlation matrix $\bR = \bD^{-\frac{1}{2}}\bSigma \bD^{-\frac{1}{2}} = (\rho_{0,ij})_{p \times p}$.
Next, we set $\bR_{h} =  (\rho_{h,ij})_{p \times p}$, where $\rho_{h,ij} = \sgn(\rho_{0,ij})(|\rho_{0,ij}| + 0.5h^{\beta})$ if $i$ and $j$ belong to different continent groups, for $h = \frac{0.5}{d}$ and $\beta = 0.75$, and $\rho_{h,ij} = \rho_{0,ij}$ if $i$ and $j$ are in the same continent group.
Let $\{\gamma_{i}, v_{i}\}_{i=1}^{k}$ be the leading eigenvalues and eigenvectors of $\widetilde{\bSigma}_{g} = \bD^{\frac{1}{2}}\bR_{h}\bD^{\frac{1}{2}} - \bLambda\bLambda '- \bSigma_{u}$.
Then, we obtained $\bB_{h} = \bV\bGamma^{\frac{1}{2}}$, where $\bGamma = \diag(\gamma_{1},\dots,\gamma_{k})$ and $\bV = (v_{1},\dots,v_{k})$.
We note that $\bB_{h}$ represents the non-synchronized structure.
Thus, we generated non-synchronized observations by
$$
y_{t} = \bB_{h}G_{t} + \bLambda F_{t} + u_{t},
$$
where $G_{t} = \bUpsilon G_{t-1} + \upsilon_{t}$, $F_{t} = \bar{\bUpsilon} F_{t-1} + \bar{\upsilon}_{t}$, and $u_{t} = \bSigma_{u}^{1/2}\widetilde{u}_{t}$, where $\widetilde{u}_{t} = \widetilde{\bUpsilon} \widetilde{u}_{t-1} + \epsilon_{t}$, with $k \times k$, $r\times r$, $p \times p$ diagonal matrices $\bUpsilon$, $\bar{\bUpsilon}$, and $\widetilde{\bUpsilon}$, respectively. 
Each diagonal element of $\bUpsilon$, $\bar{\bUpsilon}$, and $\widetilde{\bUpsilon}$ was generated from Uniform(0,0.7), and $\upsilon_{t}$, $\bar{\upsilon}_{t}$, and $\epsilon_{t}$ were drawn from $\mathcal{N}(0,\bI_{k})$,  $\mathcal{N}(0,\bI_{r})$, and  $\mathcal{N}(0,\bI_{p})$, respectively.

In this simulation study, we fixed the number of individuals $p = 500$.
We set the number of continents $S=2$ and the number of local groups $L=20$ such that each continent group included $10$ local groups (i.e., $p_l = 25$).
Also, we chose the numbers of factors as $k=3$ and $r = L\times r_{l}$, where $r_{l}=2$ for each local group $l$.
Then, we considered two cases: (i) increasing $T$ from 100 to 600 in increments of 50 with the size of frequency $d \in \{1,5\}$ (i.e., in-sample size is $T/d$) and (ii) increasing $d$ from 1 to 10 with a fixed $T=600$.
For each case, 200 simulations were conducted.

For comparison, the sample covariance matrix (SamCov), POET, Double-POET (D-POET), and Structured-POET (S-POET) methods were employed to estimate $\bSigma$. 
The average estimation errors were measured under the following norms: $\|\widehat{\bSigma} -\bSigma\|_{\Sigma}$, $\|\widehat{\bSigma} -\bSigma\|_{\max}$, and $\|(\widehat{\bSigma})^{-1} - \bSigma^{-1}\|$, where $\widehat{\bSigma}$ is one of the covariance matrix estimators. 
We note that the lower-frequency data is obtained by summing the observations with $d$-day window. 
Therefore, for the SamCov estimator and the initial pilot estimator of POET and D-POET, we used $d^{-1}\widehat{\bSigma}_{h}$ (see Section A.2 in the online supplement). 
For each estimation, we determined the number of factors for D-POET and POET using the eigenvalue ratio methods suggested by \cite{choi2023large} and \cite{ahn2013eigenvalue}, with $k_{\max}=10$ and $r_{l,\max}=10$, respectively.
For the S-POET estimation, we chose the number of ranks for each off-diagonal block by the largest singular value ratio.
In addition, we employed the soft thresholding scheme for the idiosyncratic covariance matrix estimation.

\begin{figure}
	\includegraphics[width=\linewidth]{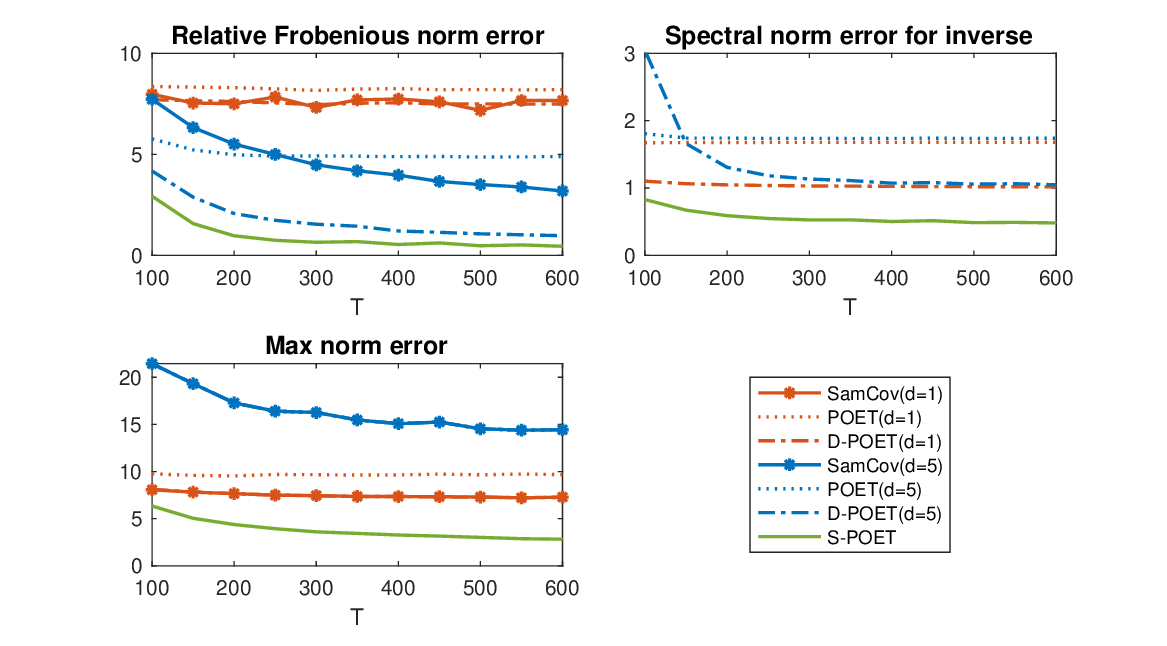}
	\centering	
	\caption{Averages of $\|\widehat{\bSigma} -\bSigma\|_{\Sigma}$,  $\|(\widehat{\bSigma})^{-1} - \bSigma^{-1}\|$, and  $\|\widehat{\bSigma} -\bSigma\|_{\max}$ for SamCov, POET, D-POET, and S-POET against $T$ with fixed $p=500$ and $L=20$. 
    Lines that exceed the upper limits of the y-axis are excluded for the spectral norm error plot.
    }				\label{varyingT}
\end{figure}

\begin{figure}
	\includegraphics[width=\linewidth]{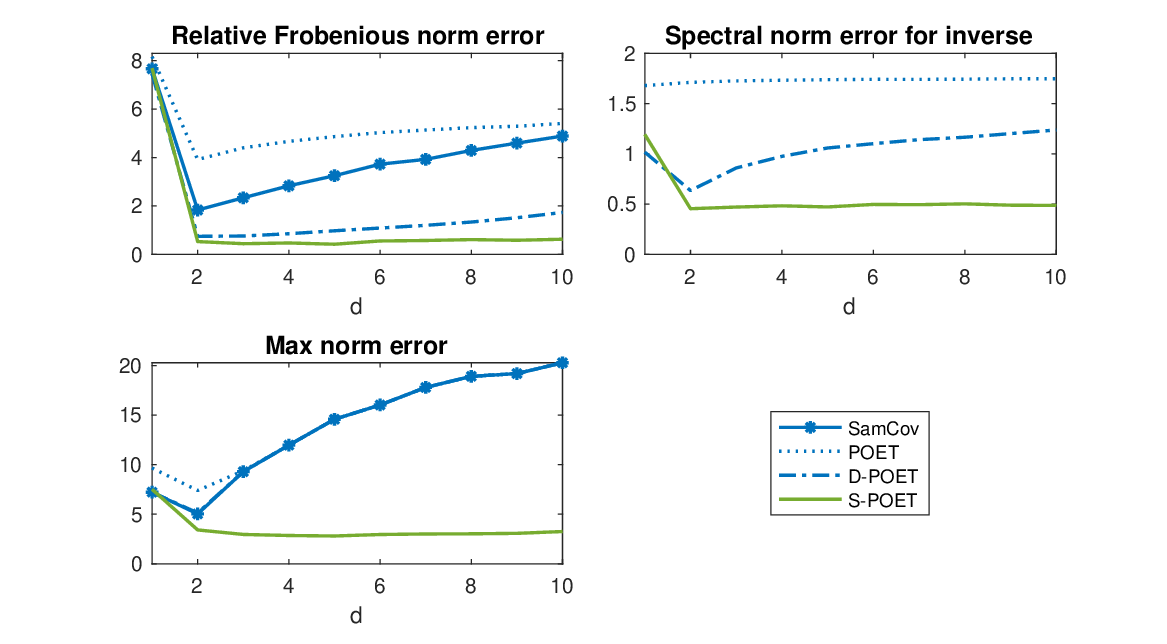}
	\centering	
	\caption{Averages of $\|\widehat{\bSigma} -\bSigma\|_{\Sigma}$,   $\|(\widehat{\bSigma})^{-1} - \bSigma^{-1}\|$, and $\|\widehat{\bSigma} -\bSigma\|_{\max}$ for SamCov, POET, D-POET, and S-POET against $d$ with fixed $p=500$, $T = 600$, and $L=20$.
    Lines that exceed the upper limits of the y-axis are excluded for the spectral norm error plot.
    }				\label{varyingd}
\end{figure}

Figures \ref{varyingT} and \ref{varyingd} depict the averages of estimation errors under different norms against $T$ and $d$, respectively.
From Figures \ref{varyingT} and \ref{varyingd}, we find that S-POET has smaller estimation errors than the other methods under different norms. 
Specifically, in Figure \ref{varyingT}, as $T$ increases, the estimation errors of S-POET and estimators with $d=5$ decrease, while the estimation errors of estimators with $d=1$ do not decrease.
This is because the estimators with $d=1$ actually estimate $\bSigma_h$ not $\bSigma_0$, and when $d=1$, $\bSigma_h$ is not close to $\bSigma_0$. 
When comparing the estimation procedures with $d=5$, S-POET shows the best performance. 
This is because S-POET can accurately estimate the local factors and idiosyncratic components by utilizing whole observations, while other estimators utilize lower-frequency observations, which causes inefficiency. 
Figure \ref{varyingd} indicates that the estimation errors of all methods dramatically drop from $d=1$ to $d=2$, which is consistent with the results shown in Figure \ref{varyingT}.
S-POET shows stable results and has the minimum estimation errors when $d=5$.
In contrast, as the frequency size $d$ increases, the estimation errors of SamCov, POET, and D-POET tend to increase again due to the smaller sample sizes.
That is, the loss of information is severe only when using  lower-frequency  observations.
From this result, we can conjecture that for a fixed $T$, the estimation error resulting from a small sample size with larger $d$ is greater than the error resulting from the effect of observation time gaps.
However, the proposed S-POET incorporates all available data to estimate the same regional covariance matrices, which helps enjoy the efficiency.  
It is worth noting that the estimation errors of POET seem constant as $d$ increases.
This is because POET does not estimate the local covariance matrix, which may dominate other estimation errors.
The above results support  the theoretical findings established in Section \ref{asymp}.

\section{Empirical Study} \label{empiric}
We conducted a minimum variance portfolio allocation study using the proposed Structured-POET method with global financial data.
We obtained the daily transaction prices of international stock markets over 15 countries by the total market capitalization.
The whole sample period is from January 3, 2017, to December 30, 2022. 
After excluding stocks with missing returns and no variation, we picked $1500$ stocks for this period based on the market cap for each country.
In particular, we selected $500$ firms for each continent and calculated both daily and weekly log-returns. 
The distribution of our sample is presented in Table A.1 in the online supplement.

We computed several estimators, including the S-POET, D-POET, POET, and SamCov estimators, for each week. 
For S-POET, we used weekly or two-day window returns to estimate the global factor component and daily returns to estimate the local factor and idiosyncratic components. 
We employed all daily, two-day window, and weekly returns for the other procedures.
For all POET-type procedures, we estimated the idiosyncratic volatility matrix using information of the 11 Global Industrial Classification Standard (GICS) sectors  \citep{ait2017using, fan2016incorporating}. 
Specifically, we set the idiosyncratic components to zero for the different sectors, while maintaining them for the same sector. 
For a robustness check, we used different numbers of global factors, $k$, ranging from 1 to 5 for D-POET and from 1 to 20 for POET.
For D-POET, we chose the number of local factors using the eigenvalue ratio method proposed by \cite{ahn2013eigenvalue} with $r_{l, \max} = 10$.
In the S-POET procedure, for each off-diagonal partitioned block, we used the best rank-one approximation as suggested by the largest singular value gap method (see Remark \ref{rankchoice}).

We considered the following constrained minimum variance portfolio allocation problem \citep{fan2012vast, jagannathan2003risk} to analyze the out-of-sample portfolio allocation performance:
\begin{equation*}\label{min problem}
	\min_{\omega} \omega^{T}\widehat{\bSigma}\omega, \text{ subject to } \omega^{\top}\mathbf{1} = 1, \; \|\omega\|_{1}\leq c,
\end{equation*}
where $\mathbf{1} = (1,\dots,1)^{\top} \in \mathbb{R}^{p}$, the gross exposure constraint $c$ varies from 1 to 4, and $\widehat{\bSigma}$ is one of the volatility matrix estimators obtained from S-POET, D-POET, POET, and SamCov. 
At the beginning of each week, we obtained optimal portfolios based on each estimator using the past 12 months' returns and held these portfolios for one week.
We then computed the square root of the realized volatility using the weekly log-returns. 
Their averages were recorded for the out-of-sample risk. 
We examined six out-of-sample periods: 2018, 2019, 2020, 2021, 2022, and the full period from 2018 to 2022.

\begin{figure}[t]
	\includegraphics[width=\linewidth]{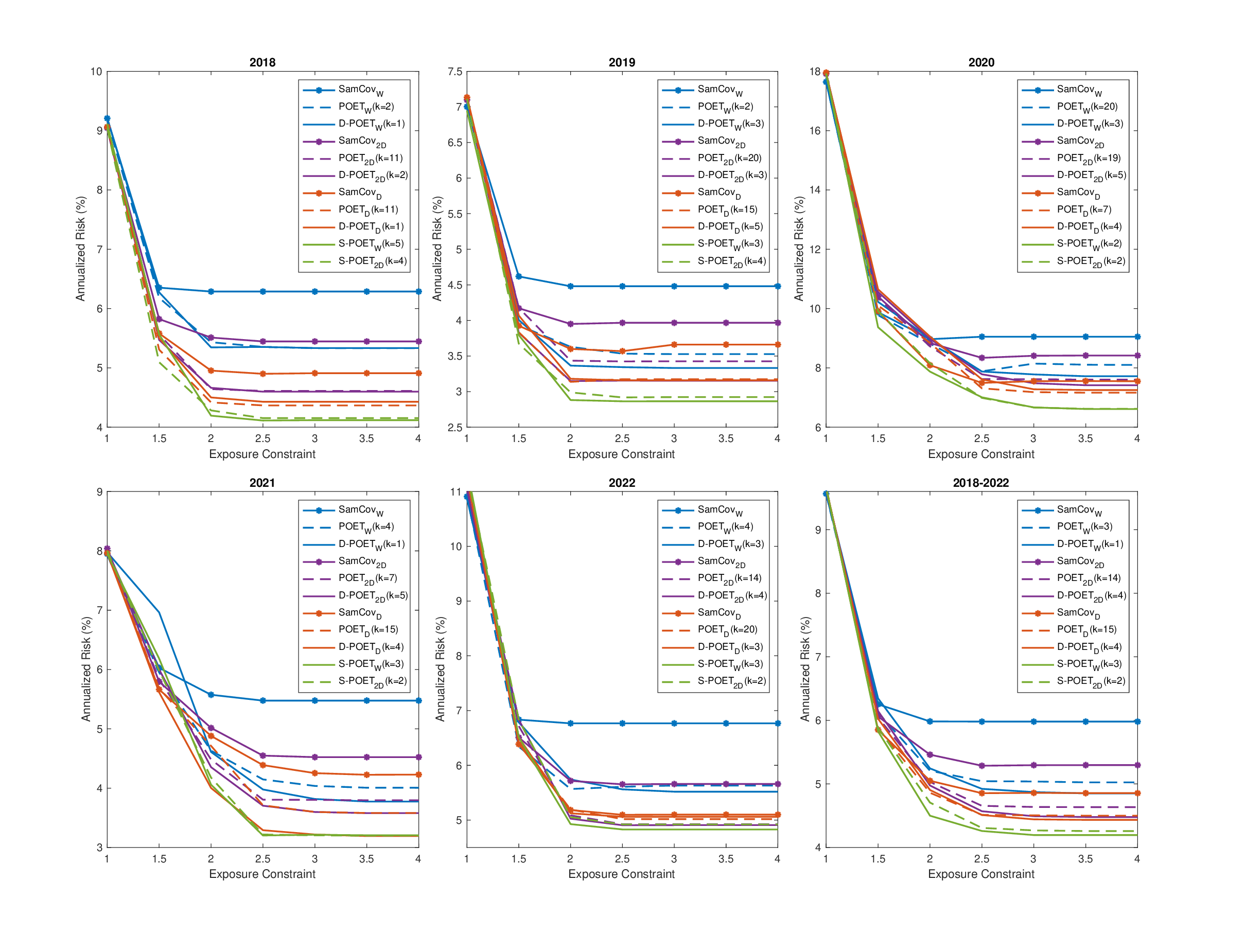}
	\centering
	\caption{Out-of-sample risks of the optimal portfolios constructed by the SamCov, POET, Double-POET, and Structured-POET estimators for the global stock market.}				\label{ALLPLOT_1yr}
\end{figure}

Figure \ref{ALLPLOT_1yr} illustrates the out-of-sample risks of the portfolios constructed by SamCov, POET, D-POET, and S-POET under varying exposure constraints.
To draw readable plots, we presented the best performing results among the range of $k$ for each estimator type and each period.
We used subscripts to explicitly denote the frequency of the data used, with W (blue lines), 2D (purple lines), and D (red lines) representing weekly, two-day window, and daily data, respectively.
As shown in Figure \ref{ALLPLOT_1yr}, S-POET consistently outperforms the other estimators.
Specifically, for all periods except 2021, S-POET$_{\text{W}}$ reduces the minimum risks by 4.7\%--10.2\% compared to the best estimator among the other methods.
When comparing the estimation procedures with the same frequency observations, D-POET exhibits lower risks than POET and SamCov.
Furthermore, D-POET, POET, and SamCov estimators using daily data tend to have lower risks than those using weekly data.
This may be because, in practice, the impact of estimation inefficiency resulting from the smaller sample size could be greater than that resulting from the different observation time points.
In addition, we also calculated the out-of-sample Sharpe ratio and averaged return in Table \ref{Sharpe}.
The results are based on selected estimators with minimum variances for the full period from 2018 to 2022.
The results also indicate that S-POET$_{\text{W}}$ outperforms the other estimators.
In summary, for portfolio allocation in the global stock market, S-POET, which incorporates both daily and weekly returns under the specific observation structure, outperforms POET and Double-POET, which use only daily, two-day window, or weekly returns.
From this result, we can conjecture that the estimation accuracy for the national factor and idiosyncratic components can be improved using more frequent data (i.e., daily returns) for each continent group.
In addition, for global factor estimation, using less frequent data (i.e., weekly returns) can manage the  different trading hour problem.

\begin{table}[htbp]
\centering
\caption{Out-of-sample Sharpe ratios and returns (multiplied by $10^4$) for the full period from 2018 to 2022} \label{Sharpe}
\label{tab:my-table}
\begin{tabular}{@{}lllllll@{}}
\toprule
& SamCov$_{\text{W}}$ & POET$_{\text{W}}$ & D-POET$_{\text{W}}$ & SamCov$_{\text{2D}}$ & POET$_{\text{2D}}$  & D-POET$_{\text{2D}}$\\ \hline
Sharpe ratio & 0.0477 & 0.0521 & 0.0580 &0.0637 & 0.0579 & 0.0658\\
Return &6.255 &  6.129 &  6.771 & 7.795 & 6.716 &  7.410\\
\midrule
& SamCov$_{\text{D}}$ & POET$_{\text{D}}$  & D-POET$_{\text{D}}$ & S-POET$_{\text{W}}$ & S-POET$_{\text{2D}}$ &\\ \hline
Sharpe ratio & 0.0712 & 0.0656 & 0.0663 & 0.0841  & 0.0696&  \\
Return & 7.978 & 7.315 & 7.373 & 8.965 & 7.417&  \\ \bottomrule
\end{tabular}
\end{table}

\section{Conclusion} \label{conclusion}

In this paper, we introduce a novel large global volatility matrix inference procedure. 
The proposed Structured-POET method leverages observation structural information from global financial markets based on latent global and national factor models.
We establish the asymptotic properties of Structured-POET and demonstrate its efficiency in estimating a large global covariance matrix compared to the Double-POET procedure. 
In our empirical study, we demonstrate that the proposed Structured-POET estimator outperforms the other existing methods in minimum variance portfolio allocation problems. 
This is because the Structured-POET procedure accurately estimates the latent national factors and idiosyncratic components using daily returns. Additionally, using weekly returns to estimate the latent global factors can mitigate the effect of different trading hours across markets.
Overall, our findings support the effectiveness of the Structured-POET method in estimating large global volatility matrices. 

On the estimation procedure, we simplify the inter-continent correlation on the idiosyncratic term to reduce the estimation error because of the loss of information issue.
However, it is interesting and important to handle both estimation error and model specification error, and thus, we leave this for a future study.

\bibliography{S-POET}

	\end{document}